\newcommand{\papertitle}{Error rates in quantum circuits}
\newcommand{\pdfauthor}{Joel Wallman}
\pgfplotsset{compat=newest}
\definecolor{darkblue}{RGB}{0,0,127} 
\definecolor{darkgreen}{RGB}{0,150,0}
\theoremstyle{plain}
\newtheorem{thm}{Theorem}
\newtheorem{lem}[thm]{Lemma}
\newtheorem{cor}[thm]{Corollary}
\theoremstyle{definition}
\theoremstyle{remark}
\renewenvironment{proof}[1][Proof]{\noindent\textbf{#1.} }{\ $\Box$}
\definecolor{nblue}{rgb}{0.2,0.2,0.7}
\definecolor{ngreen}{rgb}{0.1,0.5,0.1}
\definecolor{nred}{rgb}{0.8,0.2,0.2}
\definecolor{nblack}{rgb}{0,0,0}
\newcommand{\bs}[1]{\boldsymbol{#1}}
\newcommand{\md}[1]{\mathds{#1}}
\newcommand{\mc}[1]{\mathcal{#1}}
\newcommand{\bc}[1]{\boldsymbol{\mathcal{#1}}}
\newcommand{\mbb}[1]{\mathbb{#1}}
\newcommand{\mbf}[1]{\mathbf{#1}}
\newcommand{\ra}{\rangle}
\newcommand{\la}{\langle}
\newcommand{\tn}[1]{^{\otimes #1}}
\newcommand{\ct}{^{\dagger}}
\DeclareMathOperator{\Tr}{\mathrm{Tr}}
\newcommand{\hidden}[1]{}
\definecolor{darkblue}{RGB}{0,0,127} 
\definecolor{darkgreen}{RGB}{0,150,0}
\newcommand{\vertii}[1]{{\lVert #1 \rVert}}
\newcommand{\vertiii}[1]{{\left\vert\kern-0.25ex\left\vert\kern-0.25ex\left\vert
 #1 \right\vert\kern-0.25ex\right\vert\kern-0.25ex\right\vert}}
\newcommand{\qw}[1][-1]{\ar @{-} [0,#1]}
\newcommand{\gate}[1]{*+<.6em>{#1} \POS ="i","i"+UR;"i"+UL **\dir{-};"i"+DL **\dir{-};"i"+DR **\dir{-};"i"+UR **\dir{-},"i" \qw}
\newcommand{\rstick}[1]{*!L!<-.5em,0em>=<0em>{#1}}
\newcommand{\Qcircuit}{\xymatrix @*=<0em>}
\newcommand{\lmeter}{*=<1.8em,1.4em>{\xy 
="j","j"-<.778em,.322em>;{"j"+<.778em,-.322em> \ellipse 
ur,_{}},"j"-<0em,.4em>;p+<.5em,.9em> 
**\dir{-},"j"+<2.2em,2.2em>*{},"j"-<2.2em,2.2em>*{} \endxy} \POS 
="i","i"+UR;"i"+UL **\dir{-};"i"+DL **\dir{-};"i"+DR **\dir{-};"i"+UR 
**\dir{-},"i"}
\begin{document}
\title{\papertitle}

\author{Joel\ \surname{Wallman}}
\affiliation{Institute for Quantum Computing and Department of Applied 
Mathematics, University of Waterloo, Waterloo, Canada}

\date{\today}

\begin{abstract}
Rigorous analyses of errors in quantum circuits, or components thereof, 
typically appeal to the diamond distance, that is, the worst-case error instead 
of the average gate infidelity. This has two primary drawbacks, namely, that 
the diamond distance cannot be directly and efficiently estimated and the 
diamond distance may be unnecessarily pessimistic. In this work, we obtain 
upper- and lower-bounds on the diamond distance that are proportional to each 
other and can be efficiently estimated. We also obtain an essentially 
dimensional-independent upper bound on the average increase in purity due to a 
generalized relaxation process that is proportional to the average gate 
infidelity. We also show that the ``average'' error rate in a generic quantum 
circuit is proportional to the diamond distance, up to dimensional factors. 
Consequently, the average error in a quantum circuit can differ significantly 
from the infidelity, even when coherent processes make a negligible 
contribution to the infidelity.
\end{abstract}

\maketitle

Currently, great experimental effort is being exerted to control quantum 
systems precisely enough to allow for scalable quantum error correction and to 
demonstrate quantum supremacy, that is, perform some task on an experimental 
quantum computer that is not viable on a conventional computer. For both of 
these efforts, it is inadequate to completely characterize every circuit 
component in place, as this would be more difficult then simply simulating a 
quantum computer directly, thus removing any possible computational advantage. 
Rather, the aim of quantum characterization is to provide figures of merit and 
efficient methods of estimating those figures such that a sufficiently small 
figure of merit guarantees that fault-tolerant quantum computation is possible 
and/or that the total error in a circuit is sufficiently small.

There are several different metrics for quantifying the error rate due to some 
noise process $\mc{E}$, which we assume to be a completely positive and 
trace-preserving (CPTP) linear map $\mc{E}:\mbb{H}_d\to\mbb{H}_d$ where 
$\mbb{H}_d$ is the set of $d\times d$ density matrices, that is, Hermitian and 
positive semi-definite matrices with unit trace. The most prominent metrics are 
the average gate infidelity to the identity and the diamond distance from the 
identity (hereafter simply the infidelity and diamond distance 
respectively)~\cite{Fuchs1999}
\begin{align}\label{eq:distances}
r(\mc{E}) 
&= 1 - \int {\rm d}\psi \Tr[\psi\mc{E}(\psi)] \notag\\
\epsilon_\diamond(\mc{E})&= \max_{\rho\in \mbb{H}_{d^2}} \frac{1}{2}\lVert 
(\mc{E}-\mc{I}_d)\otimes\mc{I}_d(\rho) \rVert_1 ,
\end{align}
respectively, where the integral is over the set of pure states according 
to the unitarily-invariant Haar measure $\mathrm{d}\psi$ and
\begin{align}
\|M\|_p = \Bigl(\Tr (M\ct M)^{p/2}\Bigr)^{1/p}
\end{align} 
is the Schatten $p$-norm of $M$. The relevant metric will typically depend on 
the task being performed (and possibly the technique for proving robustness). 
Current methods of proving fault-tolerance utilize the diamond 
distance~\cite{Shor1995,Aharonov1999,Knill1998}. Presently, there 
is no known efficient protocol for directly estimating the diamond distance. 

In contrast, the infidelity can be efficiently estimated using randomized 
benchmarking~\cite{Emerson2005,Dankert2009,Knill2008,Magesan2011,Magesan2012a} 
or direct fidelity estimation~\cite{DaSilva2011,Flammia2011}. If the error 
channel is a stochastic channel (i.e., has a Kraus-operator decomposition into 
trace-orthogonal operators with one operator proportional to the identity), 
then the infidelity provides an exact estimate of the diamond 
distance~\cite{Magesan2012a}. Otherwise, the best-known bound 
is~\cite{Wallman2014}
\begin{align}\label{eq:infidelity_diamond}
\tfrac{d+1}{d}r(\mc{E}) \leq \epsilon_\diamond(\mc{E}) \leq 
\sqrt{d(d+1)r(\mc{E})},
\end{align}
where the individual scalings with respect to $d$ and $r(\mc{E})$ are 
optimal~\cite{Sanders2015}. For target (small) error rates, the upper and lower 
bounds differ by orders of magnitude. There is a belief that, despite the 
dramatically different scaling, the infidelity captures the average 
``computationally relevant'' error~\cite{Knill2008} and that the discrepancy is 
simply a difference between ``average'' and ``worst-case'' performance.

In this paper, we obtain lower- and upper bounds on the diamond distance for 
general noise in terms of functions that can be efficiently estimated, namely, 
the infidelity and the unitarity~\cite{Wallman2015} and differ by essentially a 
factor of $d^{3/2}$. The bounds here improve upon similar bounds obtained 
concurrently in Ref.~\cite{Kueng2015} by a dimensional factor and hold for 
general noise rather than only for unital noise. To obtain these bounds, we 
also prove that a worst-case version of the infidelity is proportional to the 
(average) infidelity and that the relaxation rate (i.e., the average rate at 
which the purity of a state increases) under generalized amplitude processes is 
at most linear in the infidelity, and so does not introduce any significant 
difference between the infidelity and the diamond distance.

Conceptually, the fundamental reason for the disconnect between the diamond 
distance and the infidelity is that the infidelity corresponds to an error rate 
for a measurement in an expected eigenbasis of the system, which is insensitive 
to any coherences between the expected state of the system and the orthogonal 
subspace (quantified by the unitarity). Such coherence is generically 
introduced for some input state by noise that is not completely stochastic, 
which can then have a significant impact on error rates for any measurement 
that is not in a basis containing the expected state. In general applications 
of quantum information, such as quantum computing, the measurement outcome is 
often expected to be nondeterministic even in the absence of noise, that is, 
measurements are often expected to be in a basis that does not contain the 
ideal state of the system. Consequently, the coherent contributions to the 
diamond and induced trace distances will generally be relevant.

We then turn to the problem of estimating the total error rate of a circuit (or 
of a circuit fragment consisting of the gates between two rounds of error 
correction). We define an average error rate and show that it behaves 
qualitatively like the diamond distance. In particular, primary 
contribution of coherent errors to the average error is due to the lowest-order 
terms rather than the possible worst-case alignment of rotations. We also 
demonstrate numerically that there are two characteristic scalings with respect 
to the circuit length $K$, namely, $K$ and $\sqrt{K}$ for systematic and random 
noise respectively. 

\section{The discrepancy between the infidelity and diamond 
distance}\label{sec:diamond_bound}

We now identify the precise cause of significant discrepancies between the 
diamond distance and the infidelity. We first prove that while a worst-case 
version of the infidelity exhibits the same dimensional scaling as the diamond 
distance, it is nevertheless proportional to the average infidelity 
(\cref{cor:worst_case_infidelity}). We then prove a dimension-independent bound 
on the relaxation rate due to a general non-unital process 
(\cref{thm:nonunital_bounds}). Finally, we obtain upper- and lower-bounds on 
the diamond distance that scale as $\sqrt{r}$ instead of $r$ if and only if 
there are significant coherent contributions to the noise as quantified by the 
unitarity~\cite{Wallman2015} (\cref{cor:diamonddistance}).

The infidelity is an average over all pure states. To analyze the dependence on 
the input state, we define the $\psi$-infidelity and max-infidelity to be
\begin{align}
r(\mc{E},\psi) &= 1- \Tr[\psi\mc{E}(\psi)] ,\notag\\
r_{\max}(\mc{E}) &= \max_\psi r(\mc{E},\psi)
\end{align}
respectively, where the average infidelity is $r(\mc{E})=\int {\rm d} \psi\, 
r(\mc{E},\psi)$. In the following, the average $\psi$-infidelity over a basis 
(rather than the sum) is essentially independent of the dimension. We could 
also consider the $\psi$-infidelity of $\mc{E}\otimes \mc{I}$ (allowing for 
entangled inputs), resulting in a change from $d$ to $d^2$ and again giving an 
average infidelity over the basis that is also essentially independent of the 
dimension.

\begin{thm}\label{thm:psi_infidelity}
For any $d\in\mbb{N}$, orthonormal basis $\{|j\ra:j\in\mbb{Z}_d\}$ of 
$\mbb{C}^d$, and CPTP map $\mc{E}$,
\begin{align}\label{eq:basis_infidelity}
\sum_{j\in\mbb{Z}_d} r(\mc{E},|j\ra\!\la j|) \leq (d+1)r(\mc{E}).
\end{align}
\end{thm}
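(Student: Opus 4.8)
The plan is to fix a Kraus decomposition $\mc{E}(\rho)=\sum_i K_i\rho K_i\ct$ and reduce both sides of \cref{eq:basis_infidelity} to the single quantity $\sum_i\verti{\Tr K_i}^2$. First I would expand the left-hand side directly. Since $r(\mc{E},\ket{j}\!\bra{j})=1-\la j|\mc{E}(\ket{j}\!\bra{j})|j\ra=1-\sum_i\verti{\la j|K_i|j\ra}^2$, summing over the basis gives
\begin{align}
\sum_{j\in\mbb{Z}_d} r(\mc{E},\ket{j}\!\bra{j}) = d - \sum_i\sum_{j\in\mbb{Z}_d}\verti{(K_i)_{jj}}^2,
\end{align}
where $(K_i)_{jj}=\la j|K_i|j\ra$ is the $j$th diagonal entry of $K_i$ in the chosen basis.

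The key step is to bound the double sum of squared diagonal entries from below by the squared traces. Applying the Cauchy--Schwarz inequality to the length-$d$ vector of diagonal entries gives $\verti{\Tr K_i}^2=\verti{\sum_{j}(K_i)_{jj}}^2\leq d\sum_{j}\verti{(K_i)_{jj}}^2$, hence $\sum_{j}\verti{(K_i)_{jj}}^2\geq\tfrac{1}{d}\verti{\Tr K_i}^2$; summing over $i$ yields the upper bound $\sum_{j} r(\mc{E},\ket{j}\!\bra{j})\leq d-\tfrac{1}{d}\sum_i\verti{\Tr K_i}^2$. This is the conceptual heart of the argument, and the step I expect to be the main obstacle: it is exactly the move that trades the basis-dependent left-hand side for the basis-independent combination $\sum_i\verti{\Tr K_i}^2$, which is (up to normalization) the entanglement fidelity.

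It remains to evaluate the right-hand side. I would compute $r(\mc{E})$ from the same Kraus data using the two-design integral $\int\mathrm{d}\psi\,\verti{\la\psi|A|\psi\ra}^2=\bigl(\verti{\Tr A}^2+\Tr(AA\ct)\bigr)/\bigl(d(d+1)\bigr)$ with $A=K_i$; summing over $i$ and using trace preservation, $\sum_i\Tr(K_iK_i\ct)=\Tr\md{1}=d$, gives
\begin{align}
r(\mc{E}) = \frac{d^2-\sum_i\verti{\Tr K_i}^2}{d(d+1)},
\end{align}
equivalently $(d+1)r(\mc{E})=d-\tfrac{1}{d}\sum_i\verti{\Tr K_i}^2$. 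Comparing with the upper bound from the previous paragraph yields \cref{eq:basis_infidelity} immediately. Once the Cauchy--Schwarz bridge is in place, everything else is bookkeeping. As a consistency check I would note that equality holds precisely when each $K_i$ has constant diagonal (all $d$ entries equal) in the chosen basis, so the bound is tight for sufficiently symmetric channels, and that although the intermediate expressions reference a particular Kraus decomposition, the quantity $\sum_i\verti{\Tr K_i}^2$---and hence the final inequality---is invariant under the unitary freedom in that choice.
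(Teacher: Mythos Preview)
Your argument is correct. The Cauchy--Schwarz step $\lvert\Tr K_i\rvert^2\le d\sum_j\lvert(K_i)_{jj}\rvert^2$ is exactly what is needed, and the formula $(d+1)r(\mc{E})=d-\tfrac{1}{d}\sum_i\lvert\Tr K_i\rvert^2$ follows from trace preservation and the standard two-design integral as you say, so the comparison goes through.

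The paper takes a different route: it twirls $\mc{E}$ over the Heisenberg--Weyl group relative to the chosen basis, observes that the twirl preserves both $r(\mc{E})$ and the basis sum $\sum_j r(\mc{E},|j\ra\!\la j|)$, and then uses that the twirled channel is a Weyl channel $\rho\mapsto\sum_{a,b}p_{a,b}W_{a,b}\rho W_{a,b}\ct$. In those coordinates $(d+1)r(\mc{E})=d(1-p_{0,0})$ while $\sum_j r(\mc{E},|j\ra\!\la j|)=d(1-\sum_b p_{0,b})$, and the inequality is just $p_{0,b}\ge 0$. Your approach is more elementary---a single Cauchy--Schwarz inequality with no twirling or covariance structure---and makes the equality condition (constant diagonals of each $K_i$) transparent. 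The paper's approach, on the other hand, identifies the slack as $\sum_{b\neq 0}p_{0,b}$, the total weight of pure phase errors in the Weyl twirl, which gives a clean physical interpretation and ties naturally into the paper's broader discussion of stochastic channels. Both tightness characterizations agree: for the saturating example $\mc{E}'(\rho)=p\rho+(1-p)X\rho X\ct$, the Kraus operators $\sqrt{p}\,\md{I}$ and $\sqrt{1-p}\,X$ each have constant diagonal, and correspondingly $p_{0,b}=0$ for $b\neq 0$.
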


\begin{proof}
Fix $d\in\mbb{N}$, $\{|j\ra:j\in\mbb{Z}_d\}$ to be an arbitrary orthonormal 
basis of $\mbb{C}^d$, and define the Heisenberg-Weyl operators
\begin{align}
X_d &= \sum_{j\in\mbb{Z}_d} |j\oplus 1\ra\!\la j| \notag\\
Z_d &= \sum_{j\in\mbb{Z}_d} \omega_d^j|j\ra\!\la j| \notag\\
W_{a,b} &= X^a Z^b
\end{align}
with respect to this basis, where $\oplus$ denotes addition modulo $d$ and 
$\omega_d = \exp(2\pi i/d)$. As $W_{a,b}|j\ra\!\la j|W_{a,b}\ct = |j\oplus 
a\ra\!\la j\oplus a|$ and the trace is linear,
\begin{align}\label{eq:equivalent_twirl}
\sum_{j\in\mbb{Z}_d} r(\mc{E},|j\ra\!\la j|) = \sum_{j\in\mbb{Z}_d} 
r(\mc{T}_W[\mc{E}],|j\ra\!\la j|)
\end{align}
where 
\begin{align}
\mc{T}_W[\mc{E}] = \frac{1}{d^2}\sum_{a,b\in\mbb{Z}_d} 
\mc{W}_{a,b}\ct\mc{E}\mc{W}_{a,b},
\end{align}
and $r(\mc{E}) = r[\mc{T}_W(\mc{E})]$. The Heisenberg-Weyl operators satisfy
\begin{align}
\omega_d^{-bc} W_{a,b} W_{c,d} = \omega_d^{-ad} W_{c,d}W_{a,b}
\end{align}
and so $\mc{T}_W(\mc{E})$ is Weyl-covariant, that is,
\begin{align}
\mc{W}_{a,b}\mc{T}_W(\mc{E}) = \mc{T}_W(\mc{E}\mc{W}_{a,b}).
\end{align}
Therefore there exists some probability distribution $p_{a,b}$ over 
$\mbb{Z}_d^2$ such that~\cite{Holevo2005}
\begin{align}\label{eq:weyl_twirl}
\mc{T}_W(\mc{E})(\rho) = \sum_{a,b\in\mbb{Z}_d} p_{a,b} W_{a,b} \rho W_{a,b}\ct
\end{align}
and, as the infidelity is linear and weakly unitarily invariant,
\begin{align}
r(\mc{E}) = r(\mc{T}_W\mc{E}) &= 1-\frac{\sum_{a,b\in\mbb{Z}_d} \lvert 
\Tr\sqrt{p_{a,b}} 
	W_{a,b}\rvert^2 +d}{d^2+d} \notag\\
&= \frac{d (1-p_{0,0})}{d+1}.
\end{align}
Substituting \cref{eq:weyl_twirl} into the right-hand-side of 
\cref{eq:equivalent_twirl} and using $p_{a,b}\geq 0$ gives
\begin{align}
\sum_{j\in\mbb{Z}_d} 
r(\mc{T}_W[\mc{E}],|j\ra\!\la j|) &= d(1-\sum_{b\in\mbb{Z}_d} p_{0,b}) \notag\\
&\leq (d+1)r(\mc{E}).
\end{align}
Furthermore, this bound is saturated by
\begin{align}
\mc{E}'(\rho) = p\rho + (1-p)X\rho X\ct.
\end{align}
\end{proof}

We now prove that the dimensional scaling from \cref{thm:psi_infidelity} is 
optimal. Consequently, the worst-case infidelity can exhibit the same 
dimensional scaling as the diamond norm but is always proportional to the 
average infidelity.

\begin{cor}\label{cor:worst_case_infidelity}
For any CPTP map $\mc{E}$, the worst-case infidelity satisfies
\begin{align}
r(\mc{E}) \leq \max_\psi r(\mc{E},\psi) \leq (d+1)r(\mc{E}).
\end{align}
Furthermore, there exist CPTP maps such that
\begin{align}\label{eq:infidelity_lowerbound}
\max_\psi r(\mc{E},\psi) \geq \frac{(d^2+d)r(\mc{E})}{4(d-1)}.
\end{align}
\end{cor}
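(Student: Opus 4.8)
The plan is to dispatch the two-sided bound in the first display directly from \cref{thm:psi_infidelity} and then exhibit a single unitary channel that saturates the second display. The lower inequality $r(\mc{E}) \le \max_\psi r(\mc{E},\psi)$ is immediate: $r(\mc{E}) = \int {\rm d}\psi\, r(\mc{E},\psi)$ is a Haar average of the quantities being maximized, and an average cannot exceed a maximum. For the upper inequality I would localize \cref{thm:psi_infidelity} to a single state. Given any pure $\psi$, complete it to an orthonormal basis $\{|j\ra : j\in\mbb{Z}_d\}$ with $|0\ra\!\la 0| = \psi$. Since $\mc{E}$ is CPTP each fidelity $\Tr[|j\ra\!\la j|\mc{E}(|j\ra\!\la j|)]$ lies in $[0,1]$, so every summand $r(\mc{E},|j\ra\!\la j|)$ is nonnegative; dropping all terms but $j=0$ and invoking \cref{thm:psi_infidelity} gives $r(\mc{E},\psi) \le \sum_{j} r(\mc{E},|j\ra\!\la j|) \le (d+1)r(\mc{E})$, and maximizing over $\psi$ finishes this half.

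For \cref{eq:infidelity_lowerbound} the idea is that a state can be maximally misaligned by an error that nonetheless disturbs only a two-dimensional subspace, keeping the Haar average small. Concretely I would take $U$ to act as the real rotation by $\pi/2$ on $\mathrm{span}\{|0\ra,|1\ra\}$ (so $U|0\ra = |1\ra$ and $U|1\ra = -|0\ra$) and as the identity elsewhere, and set $\mc{E}(\rho) = U\rho U\ct$. Because $U|0\ra = |1\ra$ is orthogonal to $|0\ra$, the single state $\psi = |0\ra\!\la 0|$ already gives $r(\mc{E},\psi) = 1$, so $\max_\psi r(\mc{E},\psi) = 1$.

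The only remaining step is a short computation of the average infidelity, for which I would use the standard unitary Haar identity $\int {\rm d}\psi\,|\la\psi|U|\psi\ra|^2 = (|\Tr U|^2 + d)/(d(d+1))$. Here the $2\times 2$ rotation block is traceless and the remaining $d-2$ directions contribute the identity, so $\Tr U = d-2$ and $r(\mc{E}) = 1 - \frac{(d-2)^2+d}{d(d+1)} = \frac{4(d-1)}{d(d+1)}$. Dividing yields $\max_\psi r(\mc{E},\psi) = 1 = \tfrac{(d^2+d)\,r(\mc{E})}{4(d-1)}$, so the bound is not merely satisfied but tight. The main obstacle is really one of discovery rather than difficulty: a small-angle rotation already produces a ratio tending to $(d+1)/2$, which suffices for existence, but pinning down the angle $\theta = \pi/2$ that makes the inequality an equality — together with recalling the unitary average-fidelity formula — is the only nontrivial ingredient.
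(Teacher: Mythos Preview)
Your argument is correct. For the two-sided bound you do exactly what the paper does: the lower bound is the trivial ``average $\leq$ max'' and the upper bound drops all but one nonnegative term from \cref{thm:psi_infidelity}. For \cref{eq:infidelity_lowerbound} you and the paper both exhibit an explicit unitary channel and invoke the Haar identity $\int\mathrm{d}\psi\,|\la\psi|U|\psi\ra|^2=(|\Tr U|^2+d)/(d^2+d)$, but with different unitaries: the paper uses the phase gate $U_\phi=\md{I}_d+(e^{i\theta}-1)|0\ra\!\la 0|$ and evaluates at $|+\ra$, obtaining $r(\mc{U}_\phi,|+\ra\!\la +|)=\tfrac{1}{2}(1-\cos\theta)$ and hence equality in \cref{eq:infidelity_lowerbound} for \emph{every} $\theta$, whereas your real $\pi/2$ rotation hits equality at the single point where the worst-case infidelity is maximal ($=1$). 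The paper's family is marginally more informative because it shows the ratio $(d^2+d)/[4(d-1)]$ is attained at arbitrarily small infidelity; your example is a bit cleaner computationally since $\Tr U=d-2$ and $r(\mc{E},|0\ra\!\la 0|)=1$ are read off by inspection. Either construction fully proves the statement.
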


\begin{proof}
The upper bound follows from \cref{thm:psi_infidelity} with the non-negativity 
of the $\psi$-infidelity. \Cref{eq:infidelity_lowerbound} can be obtained by 
setting $\mc{E} = \mc{U}_\phi$ where $U_\phi = \md{I}_d + 
(e^{i\theta}-1)|0\ra\!\la0|$, which has average infidelity
\begin{align}
r(\mc{U}_\phi) = 1-\frac{\lvert \Tr U_\phi\rvert^2 +d}{d^2+d} = 
\frac{2(d-1)(1-\cos\theta)}{d^2+d}.
\end{align}
Evaluating the $\psi$-infidelity for $|+\ra:= (|0\ra + |1\ra)\sqrt{2}$ gives
\begin{align}
\max_\psi r(\mc{U}_\phi,\psi) \geq r(\mc{U}_\phi,|+\ra\la+|) &= 
\frac{1}{2}(1-\cos\theta) \notag\\
&= \frac{(d^2+d)r(\mc{U}_\phi)}{4(d-1)}.
\end{align}
\end{proof}

\Cref{cor:worst_case_infidelity} demonstrates that the potential discrepancy 
between the $O(\sqrt{r})$ scaling of the diamond distance and the average 
infidelity is not simply the difference between ``average'' and ``worst-case''. 
The discrepancy only arises for channels that are not stochastic. The primary 
examples of non-stochastic processes are relaxation and coherent processes.

Relaxation to a ground state is a common physical process that cannot be 
described as a stochastic channel. The canonical example is the single-qubit 
amplitude damping channel with Kraus operators
\begin{align}
K_1 = |0\ra\!\la0| + \sqrt{1-\gamma}|1\ra\!\la1|,\ K_2 = 
\sqrt{\gamma}|0\ra\!\la1|.
\end{align}
For multiple qubits, the energy eigenbasis will generically be entangled 
relative to the computational basis. We now prove that the relaxation rate 
$\alpha(\mc{E})$ (defined below, which quantifies the average increase in 
purity) is proportional to the infidelity and essentially independent of the 
dimension. Note that the following bound coincides with \cite[Prop. 
12]{Wallman2014} for $d=2$. 

\begin{thm}\label{thm:nonunital_bounds}
For any CPTP map $\mathcal{E}\in\mbb{T}_d$, 
\begin{align}
\alpha(\mc{E}):=\|\mc{E}(\tfrac{1}{d}\md{I}_d)-\tfrac{1}{d}\md{I}_d\|_2 
\leq  
\frac{\sqrt{2}(d+1)r(\mc{E})}{d}.
\end{align}
\end{thm}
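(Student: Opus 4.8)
The plan is to work directly with a Kraus decomposition $\mc{E}(\rho)=\sum_k A_k\rho A_k\ct$ satisfying the trace-preservation constraint $\sum_k A_k\ct A_k = \md{I}_d$, and to read off the displacement of the maximally mixed state as a measure of the failure of $\sum_k A_k A_k\ct$ to equal $\md{I}_d$. Writing $\Delta := \mc{E}(\tfrac1d\md{I}_d)-\tfrac1d\md{I}_d$, trace preservation immediately gives
\begin{align}
\Delta = \frac1d\sum_k\pa{A_kA_k\ct - A_k\ct A_k},
\end{align}
so $\alpha(\mc{E})=\|\Delta\|_2$ is controlled by a sum of commutators $[A_k,A_k\ct]$. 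Since the identity component of each Kraus operator cancels in a commutator, I would first split $A_k = \tfrac{\Tr A_k}{d}\md{I}_d + B_k$ with $\Tr B_k = 0$ and note that $A_kA_k\ct - A_k\ct A_k = B_kB_k\ct - B_k\ct B_k$, so that only the traceless parts $B_k$ enter.

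Next I would aggregate the two sides of the commutators into $P:=\sum_k B_kB_k\ct$ and $Q:=\sum_k B_k\ct B_k$, both positive semidefinite with the same trace $s:=\sum_k\|B_k\|_2^2$, giving $\Delta = \tfrac1d(P-Q)$. The key quantitative input is to identify $s$ with the infidelity. Using $\|B_k\|_2^2 = \|A_k\|_2^2 - \tfrac1d|\Tr A_k|^2$, the normalization $\sum_k\|A_k\|_2^2 = \Tr\md{I}_d = d$, and the Kraus-operator expression for the infidelity $\sum_k|\Tr A_k|^2 = d^2 - d(d+1)r(\mc{E})$ (equivalently $1-F_e = \tfrac{(d+1)r(\mc{E})}{d}$ for the entanglement fidelity $F_e=\tfrac1{d^2}\sum_k|\Tr A_k|^2$), a short summation yields the clean identity $s = (d+1)r(\mc{E})$.

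Finally I would bound $\|\Delta\|_2=\tfrac1d\|P-Q\|_2$. Expanding $\|P-Q\|_2^2 = \Tr P^2 + \Tr Q^2 - 2\Tr[PQ]$ and invoking $\Tr[PQ]\geq0$ (since $P,Q\succeq0$) together with $\Tr P^2\leq(\Tr P)^2=s^2$ and the same bound for $Q$ gives $\|P-Q\|_2\leq\sqrt2\,s$, hence $\alpha(\mc{E})\leq\tfrac{\sqrt2\,s}{d}=\tfrac{\sqrt2(d+1)r(\mc{E})}{d}$, which is the claim. The main obstacle — and the only place where the constant is at stake — is precisely this last step: applying the triangle inequality term-by-term to $\sum_k[B_k,B_k\ct]$ would only produce the weaker constant $2$, so it is essential to combine all Kraus terms into the single pair $(P,Q)$ and use positivity of the product trace to recover the sharp factor $\sqrt2$.
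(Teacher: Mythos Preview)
Your proof is correct and takes a genuinely different route from the paper. The paper works in the eigenbasis of $\mc{E}(\md{I}_d)$, rewrites $\alpha(\mc{E})^2$ as $d^{-2}\sum_j\bigl(r(\mc{E},|j\ra\!\la j|)-v_j\bigr)^2$ where $v_j=\sum_{k\neq j}\Tr\bigl[|j\ra\!\la j|\,\mc{E}(|k\ra\!\la k|)\bigr]$ is the total ``flow'' into $|j\ra$ from the other basis states, bounds this by $2d^{-2}\bigl(\sum_j r(\mc{E},|j\ra\!\la j|)\bigr)^2$, and then invokes \cref{thm:psi_infidelity} (the basis-sum inequality $\sum_j r(\mc{E},|j\ra\!\la j|)\leq(d+1)r(\mc{E})$, established via a Weyl twirl). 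Your Kraus-decomposition argument bypasses \cref{thm:psi_infidelity} entirely: the identity $s=\sum_k\lVert B_k\rVert_2^2=(d+1)r(\mc{E})$ is simply the standard relation between entanglement fidelity and average gate fidelity, and the sharp $\sqrt{2}$ emerges from the single observation $\Tr[PQ]\geq 0$ for positive semidefinite $P,Q$ after aggregating into $P=\sum_k B_kB_k\ct$ and $Q=\sum_k B_k\ct B_k$. Your route is therefore more self-contained and elementary; the paper's route, by contrast, exposes a physical ``net flow'' interpretation of $\alpha(\mc{E})$ and ties the bound into the $\psi$-infidelity framework developed elsewhere in the section.
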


\begin{proof}
Consider the spectral resolution $\mc{E}(\md{I}_d) = \sum_{j,k} c_{jk} 
|j\ra\!\la j|$ 
in some orthonormal basis $\{|j\ra:j\in\mathbb{Z}_d\}$ of $\mathbb{C}^d$, where 
\begin{align}
c_{j,k} = \Tr |j\ra\!\la j|\mc{E}(|k\ra\!\la k|) \in[0,1].
\end{align}
Then
\begin{align}
\alpha(\mc{E})^2
&= \Tr \mc{E}(\tfrac{1}{d}\md{I}_d)^2 - 
\frac{2}{d}\Tr\mc{E}(\tfrac{1}{d}\md{I}_d) + \frac{1}{d} \notag\\
&= \frac{1}{d^2}\sum_j \bigl(\sum_k c_{jk}\bigr)^2 -\frac{1}{d} \notag\\
&= \frac{1}{d^2}\sum_j c_{j,j}^2 + \frac{1}{d^2}\sum_j 2c_{j,j}v_j + 
\frac{1}{d^2}\sum_j v_j^2 - \frac{1}{d} \notag\\
&= \frac{1}{d^2}\sum_j (r(\mc{E},|j\ra\!\la j|)-v_j)^2
\end{align}
as the basis is orthonormal with $v_j = \sum_{k\neq j} c_{j,k}$ and we have 
used $c_{j,j} = 1 - r(\mc{E},|j\ra\!\la j|)$ and
\begin{align}
\sum_j v_j &= \sum_{j,k\neq j} c_{j,k} \notag\\
&= \sum_{k,j\neq k} c_{j,k} \notag\\
&= \sum_j r(\mc{E},|j\ra\!\la j|) 
\end{align}
for trace-preserving maps. The difference $r(\mc{E},|j\ra\!\la j|)-v_j = \Tr 
(\md{I}_d - |j\ra\!\la j|)\mc{E}(|j\ra\la j|) - \Tr |j\ra\la j|\mc{E}(\md{I}_d 
- |j\ra\!\la j|)$ is the net flow out of the state $|j\ra\!\la j|$ and will 
typically be smaller than indicated by the following bound. Noting that 
$v_j,r(\mc{E},|j\ra\!\la j|)\geq 0$, we have
\begin{align}
\alpha(\mc{E})^2 &\leq \frac{1}{d^2}\sum_j [r(\mc{E},|j\ra\!\la j|)^2 + 
v_j^2] 
\notag\\
&\leq \frac{1}{d^2}\Bigl(\sum_j r(\mc{E},|j\ra\!\la j|\Bigr)^2 + 
\frac{1}{d^2}\Bigl(\sum_j v_j\Bigr)^2 \notag\\
&= \frac{2}{d^2}\Bigl(\sum_j r(\mc{E},|j\ra\!\la j|)\Bigr)^2 \notag\\
&\leq 2d^{-2}(d+1)^2 r(\mc{E})^2,
\end{align}
where the final inequality follows from \cref{thm:psi_infidelity}.
\end{proof}

We now obtain lower- and upper-bounds on the diamond norm of a general linear 
map $\mc{T}$ in terms of the purity of the Jamio{\l}kowski-isomorphic state 
\begin{align}
J(\mc{T}) &= d^{-1}\sum_{j,k\in\mbb{Z}_d} \mc{T}(|j\ra\!\la k|)\otimes 
|j\ra\!\la k|
\end{align}
that differ by a factor of $d^{3/2}$. We will then state upper and lower bounds 
on the diamond distance in terms of the infidelity and the 
unitarity~\cite{Wallman2015}
\begin{align}
u(\mc{E}) = \frac{d}{d-1}\int\mathrm{d}\psi\, \Tr\mc{E}(\psi-\md{I}_d/d)^2,
\end{align}
both of which can be efficiently estimated experimentally. In particular, 
$u(\mc{E})\in[p(\mc{E})^2,1]$ with $u-p(\mc{E})^2 \in O(r^2)$ giving a diamond 
distance that scales linearly with $r(\mc{E})$ if $\mc{E}$ is stochastic where 
$p(\mc{E})=1-dr(\mc{E})/(d-1)$ is the randomized benchmarking decay constant.

\begin{thm}\label{thm:diamond_bounds}
For any linear map $\mc{T}:\mbb{C}^{d\times 
	d}\to\mbb{C}^{d\times d}$,
\begin{align}
\lVert J(\mc{T})\rVert_2 \leq
\vertiii{\mc{T}}_1
\leq d^{3/2}\lVert J(\mc{T})\rVert_2
\end{align}
where
\begin{align}
\vertiii{\mc{T}}_1 = \sup_{A\in\mbb{C}^{d^2\times d^2}:\vertii{A}_1=1} 
\vertii{\mc{T}\otimes\mc{I}_d(A)}_1 .
\end{align}
\end{thm}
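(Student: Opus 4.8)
The plan is to establish the two inequalities by quite different routes: the lower bound by exhibiting a single good test operator, and the upper bound by a convexity reduction followed by two applications of Cauchy--Schwarz.

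For the lower bound I would note that $J(\mc T)$ is itself an image of $\mc T\ox\mc I_d$ on a unit-trace-norm operator. Indeed, with the maximally entangled state $\Phi = d^{-1}\sum_{j,k}\ket{j}\!\bra{k}\ox\ket{j}\!\bra{k}$ one has $\mc T\ox\mc I_d(\Phi)=J(\mc T)$ and $\vertii{\Phi}_1=1$, so $\vertiii{\mc T}_1\ge\vertii{\mc T\ox\mc I_d(\Phi)}_1=\vertii{J(\mc T)}_1\ge\vertii{J(\mc T)}_2$, the last step being monotonicity of the Schatten $p$-norms in $p$. This direction is immediate; the content is in the upper bound.

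For the upper bound I would first reduce the supremum to rank-one inputs: $A\mapsto\vertii{\mc T\ox\mc I_d(A)}_1$ is convex, and the extreme points of the (compact, convex) trace-norm unit ball are the operators $\ket{\psi}\!\bra{\phi}$ with $\ket{\psi},\ket{\phi}\in\mbb C^d\ox\mbb C^d$ unit vectors, so the supremum is attained at such a point. Next I would use the operator--Schmidt decomposition $\mc T(X)=\sum_i\mu_i L_i X R_i\ct$, where $\{L_i\}$ and $\{R_i\}$ are each orthonormal in the Hilbert--Schmidt inner product and $\mu_i\ge 0$; this is the singular value decomposition of $d\,J(\mc T)$ under the operator--vector isomorphism, so that $\sum_i\mu_i^2=\vertii{d\,J(\mc T)}_2^2=d^2\vertii{J(\mc T)}_2^2$. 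Writing $\ket{\psi_i}=(L_i\ox\md I_d)\ket\psi$ and $\ket{\phi_i}=(R_i\ox\md I_d)\ket\phi$, the triangle inequality gives the bound $\sum_i\mu_i\vertii{\ket{\psi_i}}\,\vertii{\ket{\phi_i}}$, and Cauchy--Schwarz over $i$ then gives
\[
\vertii{\mc T\ox\mc I_d(\ket{\psi}\!\bra{\phi})}_1
\le\Bigl(\textstyle\sum_i\mu_i\vertii{\ket{\psi_i}}^2\Bigr)^{1/2}\Bigl(\textstyle\sum_i\mu_i\vertii{\ket{\phi_i}}^2\Bigr)^{1/2},
\]
where the first factor equals $\bra{\psi}(M\ox\md I_d)\ket{\psi}\le\vertii{M}_\infty$ with $M=\sum_i\mu_i L_i\ct L_i$ (and symmetrically $N=\sum_i\mu_i R_i\ct R_i$ for the second).

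It then remains to bound $\vertii{M}_\infty$. For a unit vector $\ket\chi$ put $a_i=\vertii{L_i\ket\chi}^2$; then $a_i\le\vertii{L_i}_2^2=1$, while $\sum_i a_i=\bra\chi\bigl(\sum_i L_i\ct L_i\bigr)\ket\chi\le d$ because $\sum_i L_i\ct L_i\le d\,\md I_d$ for any Hilbert--Schmidt orthonormal set (the sum over a complete basis equals $d\,\md I_d$ independently of the basis, and the omitted terms are positive). Hence $\sum_i a_i^2\le(\max_i a_i)\sum_i a_i\le d$, and Cauchy--Schwarz yields $\bra\chi M\ket\chi=\sum_i\mu_i a_i\le(\sum_i\mu_i^2)^{1/2}(\sum_i a_i^2)^{1/2}\le(d\vertii{J(\mc T)}_2)\sqrt d=d^{3/2}\vertii{J(\mc T)}_2$, so $\vertii{M}_\infty\le d^{3/2}\vertii{J(\mc T)}_2$ and likewise for $N$. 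Combining, $\vertiii{\mc T}_1\le\sqrt{\vertii{M}_\infty\vertii{N}_\infty}\le d^{3/2}\vertii{J(\mc T)}_2$. The hard part is precisely this dimensional bookkeeping: the crude estimate $\vertii{M}_\infty\le\Tr M=d\vertii{J(\mc T)}_1$ only yields $d^2\vertii{J(\mc T)}_2$, and recovering $d^{3/2}$ requires the refined Cauchy--Schwarz that simultaneously exploits $\sum_i\mu_i^2=d^2\vertii{J(\mc T)}_2^2$ and the operator inequality $\sum_i L_i\ct L_i\le d\,\md I_d$; establishing that inequality (its basis-independence) and the extreme-point reduction are the two facts I would take care to justify.
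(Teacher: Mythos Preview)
Your argument is correct, but the route is genuinely different from the paper's. The paper invokes Watrous's variational formula
\[
\vertiii{\mc{T}}_1 = \sup_{\rho,\sigma\in\mbb{H}_d} d\,\bigl\lVert (\md{I}_d\otimes\sqrt{\rho})\,J(\mc{T})\,(\md{I}_d\otimes\sqrt{\sigma})\bigr\rVert_1,
\]
obtains the lower bound by setting $\rho=\sigma=\md{I}_d/d$ (which reproduces your $\lVert J(\mc{T})\rVert_1\ge\lVert J(\mc{T})\rVert_2$), and gets the upper bound in one line via H{\"o}lder, $\lVert ABC\rVert_1\le\lVert A\rVert_\infty\lVert B\rVert_2\lVert C\rVert_2$, together with $\lVert\md{I}_d\otimes\sqrt{\rho}\rVert_\infty\le 1$ and $\lVert\md{I}_d\otimes\sqrt{\sigma}\rVert_2=\sqrt{d}$. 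Your proof instead reduces to rank-one inputs by convexity, expands $\mc{T}$ in a Kraus-like singular value decomposition, and then runs two layers of Cauchy--Schwarz, the key step being the operator inequality $\sum_i L_i^\dagger L_i\le d\,\md{I}_d$ for any Hilbert--Schmidt orthonormal family. The paper's version is considerably shorter but imports Watrous's characterization as a black box; yours is self-contained and makes explicit where the factor $d^{3/2}=d\cdot\sqrt{d}$ comes from (the $d$ from $\sum_i\mu_i^2=d^2\lVert J(\mc{T})\rVert_2^2$ and the $\sqrt{d}$ from $\sum_i a_i^2\le d$). In effect your extreme-point reduction together with the bound on $\langle\psi|(M\otimes\md{I}_d)|\psi\rangle$ is a hands-on rederivation of the content of Watrous's formula, so the two arguments are cousins rather than unrelated; the paper's packaging is cleaner for citation, while yours would be preferable in a self-contained exposition.
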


\begin{proof}
By \cite[Thm. 6]{Watrous2012},
\begin{align}
\vertiii{\mc{T}}_1 = \sup_{\rho,\sigma\in\in\mbb{H}_d} 
d\vertii{(\md{I}_d\otimes\sqrt{\rho})J(\mc{T})(\md{I}_d\otimes\sqrt{\sigma})}_1 
\end{align}
noting that our convention of $J(\mc{T})$ differs from Ref.~\cite{Watrous2012} 
by a factor of $d$. The lower bound can be obtained by setting 
$\rho=\sigma=\md{I}_d/d$ and applying \cref{eq:standardCS}. By H{\"{o}}lder's 
inequality,
\begin{align}
\lVert A B C \rVert_1 
\leq \lVert A\rVert_\infty \lVert BC\rVert_1 
\leq  \lVert A\rVert_\infty \lVert B\rVert_2 \lVert C\rVert_2
\end{align}
for arbitrary $A,B,C\in\mbb{C}^{m\times m}$. As $\rho,\sigma\in\mbb{H}_d$, they 
are positive semi-definite and have unit trace and so $\lVert 
\md{I}_d\otimes\sqrt{\rho}\rVert_\infty \leq 1$ and $\lVert 
\md{I}_d\otimes\sqrt{\sigma}\rVert_2 \leq \sqrt{d}$ for all 
$\rho,\sigma\in\mbb{H}_d$. (Note that the bound in Ref.~\cite{Wallman2014} uses 
$\lVert BC\rVert_1 \leq \lVert B\rVert_1 \lVert C\rVert_\infty$ instead, which, 
together with the Fuchs-van de Graaf inequality~\cite{Fuchs1999}, gives the 
bound in \cref{eq:infidelity_diamond}.)
\end{proof}

\begin{cor}\label{cor:diamonddistance}
For any quantum channel $\mc{E}\in\mbb{T}_d$,
\begin{align}
\frac{1}{\sqrt{2}}\lVert J(\Delta)\rVert_2 \leq
\epsilon_\diamond(\mc{E})
\leq \frac{d^{3/2}}{2}\lVert J(\Delta)\rVert_2
\end{align}
where $\Delta = \mc{E}-\mc{I}$. In terms of the infidelity and unitarity,
\begin{align}
\frac{C}{\sqrt{2}}
\leq \epsilon_\diamond(\mc{E})
\leq \sqrt{\frac{d^3 C^2}{4} + \frac{(d+1)^2 r(\mc{E})^2}{2}},
\end{align}
where
\begin{align}
C^2 = \frac{d^2-1}{d^2} (u(\mc{E})-2p(\mc{E})+1).
\end{align}
\end{cor}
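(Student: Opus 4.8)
The plan is to first read off the pair of bounds involving $\vertii{J(\Delta)}_2$ from \cref{thm:diamond_bounds}, and then to evaluate $\vertii{J(\Delta)}_2$ exactly in terms of the unitarity, the infidelity, and the non-unitality $\alpha(\mc{E})$. Since $\mc{E}$ and $\mc{I}$ are trace preserving and completely positive, $\Delta$ is Hermiticity preserving, so $\vertiii{\Delta}_1 = 2\epsilon_\diamond(\mc{E})$ with the optimization attained on a density matrix; the upper bound $\epsilon_\diamond(\mc{E}) \le \tfrac{d^{3/2}}{2}\vertii{J(\Delta)}_2$ is then immediate from \cref{thm:diamond_bounds}. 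For the lower bound I would evaluate the diamond distance on the single maximally entangled input $\ket{\Omega} = d^{-1/2}\sum_j \ket{j}\otimes\ket{j}$, for which $(\Delta\otimes\mc{I}_d)(\ket{\Omega}\bra{\Omega}) = J(\Delta)$ and hence $\epsilon_\diamond(\mc{E}) \ge \tfrac12 \vertii{J(\Delta)}_1$. Trace preservation gives $\Tr J(\mc{E}) = \Tr J(\mc{I}) = 1$, so $J(\Delta)$ is traceless and Hermitian; for such a matrix the positive and negative eigenvalues each sum to a common value $P$, so $\vertii{J(\Delta)}_1 = 2P$ while $\vertii{J(\Delta)}_2^2 \le 2P^2$, giving $\vertii{J(\Delta)}_1 \ge \sqrt2\,\vertii{J(\Delta)}_2$ and therefore $\epsilon_\diamond(\mc{E}) \ge \tfrac{1}{\sqrt2}\vertii{J(\Delta)}_2$.

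The technical core is to expand
\begin{align}
\vertii{J(\Delta)}_2^2 = \Tr J(\mc{E})^2 - 2\Tr[J(\mc{E})J(\mc{I})] + \Tr J(\mc{I})^2
\end{align}
and identify each term. Since $J(\mc{I}) = \ket{\Omega}\bra{\Omega}$ is pure, $\Tr J(\mc{I})^2 = 1$, and $\Tr[J(\mc{E})J(\mc{I})] = \bra{\Omega}J(\mc{E})\ket{\Omega}$ is the entanglement fidelity, which the standard relation between entanglement and average fidelity converts to $1 - (d+1)r(\mc{E})/d$. For $\Tr J(\mc{E})^2$ I would pass to the Liouville matrix $S_{jk} = \Tr[\sigma_j\mc{E}(\sigma_k)]$ in a Hermitian orthonormal basis with $\sigma_0 = \md{I}_d/\sqrt d$ and use $\Tr J(\mc{E})^2 = d^{-2}\sum_{jk}|S_{jk}|^2$. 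Trace preservation forces $S_{00} = 1$ and $S_{0k} = 0$ for $k\ge1$, so the sum decomposes into the identity entry, the non-unital column, and the unital block; the unital block equals $(d^2-1)u(\mc{E})$ by the definition of the unitarity, while $\sum_{j\ge1}|S_{j0}|^2 = d\,\alpha(\mc{E})^2$ after identifying $S_{j0}$ with the components of $\mc{E}(\md{I}_d/d) - \md{I}_d/d$. Substituting $p(\mc{E}) = 1 - dr(\mc{E})/(d-1)$ into $C^2$, the constant, $u$, and $r$ contributions should collapse to leave the clean identity
\begin{align}
\vertii{J(\Delta)}_2^2 = C^2 + \frac{\alpha(\mc{E})^2}{d}.
\end{align}

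Given this identity the corollary follows quickly: since $\alpha(\mc{E})^2/d \ge 0$ we have $\vertii{J(\Delta)}_2 \ge C$, so the lower bound yields $\epsilon_\diamond(\mc{E}) \ge C/\sqrt2$; for the upper bound, $\epsilon_\diamond(\mc{E}) \le \tfrac{d^{3/2}}{2}\vertii{J(\Delta)}_2 = \tfrac12\sqrt{d^3 C^2 + d^2\alpha(\mc{E})^2}$, and bounding $\alpha(\mc{E})^2 \le 2(d+1)^2 r(\mc{E})^2/d^2$ via \cref{thm:nonunital_bounds} gives $d^2\alpha(\mc{E})^2/4 \le (d+1)^2 r(\mc{E})^2/2$, which is exactly the stated upper bound. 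I expect the main obstacle to be the exact evaluation of $\Tr J(\mc{E})^2$: getting the normalization in $\Tr J(\mc{E})^2 = d^{-2}\sum_{jk}|S_{jk}|^2$ right and correctly matching the unital block to $(d^2-1)u(\mc{E})$ and the non-unital column to $d\,\alpha(\mc{E})^2$ are what make the $u$- and $r$-dependent terms cancel down to $C^2 + \alpha(\mc{E})^2/d$; any dimensional error there would destroy the exact cancellation.
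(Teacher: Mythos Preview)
Your proposal is correct and follows essentially the same route as the paper: the first pair of bounds comes from \cref{thm:diamond_bounds} together with the $\sqrt{2}$ sharpening for traceless Hermitian $J(\Delta)$ (which is exactly \cref{lem:sharperCS}, reproved inline in your argument), and the second pair from the identity $\lVert J(\Delta)\rVert_2^2 = C^2 + \alpha(\mc{E})^2/d$ combined with non-negativity and \cref{thm:nonunital_bounds}. The only difference is cosmetic: the paper imports that identity from \cite[Prop.~9]{Wallman2015}, whereas you derive it directly via the Liouville representation; both produce the same expression and the remaining steps coincide.
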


\begin{proof}
First note the factor of $1/2$ from the definition of 
$\epsilon_\diamond(\mc{E})$ in \cref{eq:infidelity_diamond} and we obtain the 
$\sqrt{2}$ improvement from \cref{lem:sharperCS} for trace-preserving maps 
$\mc{E}$, so that $\Tr\Delta(A)=0$ for all $A$. By \cite[Prop. 9]{Wallman2015},
\begin{align}
\lVert J(\Delta)\rVert_2^2 
&= \frac{d^2-1}{d^2} u(\Delta) + \frac{\alpha(\mc{E})^2}{d} \notag\\
&= \frac{d^2-1}{d^2} (u(\mc{E})-2p+1) + \frac{\alpha(\mc{E})^2}{d}
\end{align}
where we have used $\Tr \Delta(A)=0$ for all $A$ as $\mc{E}$ is a 
trace-preserving map, and $d\alpha(\mc{E})^2 = \lVert 
\bc{E}_{\mathrm{n}}\rVert_2^2$. The lower and upper bounds follow from the 
non-negativity of norms and \cref{thm:nonunital_bounds} respectively.
\end{proof}

\begin{lem}\label{lem:sharperCS}
For any traceless Hermitian matrix $M\in\mbb{C}^{d\times d}$,
\begin{align}\label{eq:CS_singular}
\sqrt{2}\|M\|_2 \leq \|M\|_1 \leq \sqrt{d}\|M\|_2.
\end{align}
Moreover, both these bounds are saturated.
\end{lem}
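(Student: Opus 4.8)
The plan is to reduce both matrix inequalities to elementary inequalities on the spectrum. Since $M$ is Hermitian it is unitarily diagonalizable with real eigenvalues $\lambda_1,\dots,\lambda_d$, and the traceless hypothesis means $\sum_j\lambda_j=0$. Because every Schatten norm depends only on the singular values, and the singular values of a Hermitian matrix are $|\lambda_j|$, I would write $\|M\|_2=(\sum_j\lambda_j^2)^{1/2}$ and $\|M\|_1=\sum_j|\lambda_j|$. Thus \cref{eq:CS_singular} is equivalent to the pair of scalar inequalities $2\sum_j\lambda_j^2\le(\sum_j|\lambda_j|)^2\le d\sum_j\lambda_j^2$ subject to the linear constraint $\sum_j\lambda_j=0$.

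The upper bound is immediate and, notably, does not use tracelessness: applying the Cauchy--Schwarz inequality to the vector $(|\lambda_j|)_j$ and the all-ones vector gives $\sum_j|\lambda_j|\le\sqrt{d}\,(\sum_j\lambda_j^2)^{1/2}$, with equality precisely when all $|\lambda_j|$ coincide. This is just the generic $\|\cdot\|_1\le\sqrt{d}\,\|\cdot\|_2$ relation on the eigenvalue vector.

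The lower bound is the step where the traceless hypothesis is essential, and I expect it to be the main (though modest) obstacle. I would split the spectrum into its positive and negative parts, setting $p=\sum_{\lambda_j>0}\lambda_j$ and $n=\sum_{\lambda_j<0}|\lambda_j|$. Tracelessness forces $p=n$, so $\|M\|_1=p+n=2p$. Using that a sum of squares of nonnegative numbers is at most the square of their sum, I get $\sum_{\lambda_j>0}\lambda_j^2\le p^2$ and $\sum_{\lambda_j<0}\lambda_j^2\le n^2=p^2$, whence $\|M\|_2^2\le 2p^2=\tfrac{1}{2}\|M\|_1^2$, which rearranges to $\sqrt{2}\,\|M\|_2\le\|M\|_1$. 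Equality here requires at most one nonzero eigenvalue in each of the two parts.

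Finally, for saturation I would exhibit explicit matrices. The lower bound is attained by any traceless rank-two matrix, for instance $M=\mathrm{diag}(1,-1,0,\dots,0)$, for which $\|M\|_1=2=\sqrt{2}\,\|M\|_2$; the upper bound is attained by a flat, balanced spectrum, for instance $M=\mathrm{diag}(1,\dots,1,-1,\dots,-1)$ with equally many $\pm1$ entries, for which $\|M\|_1=d=\sqrt{d}\,\|M\|_2$.
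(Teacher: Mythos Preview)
Your proof is correct and follows essentially the same approach as the paper: both reduce to the eigenvalues, obtain the upper bound by Cauchy--Schwarz, and obtain the lower bound by splitting the spectrum into its positive and negative parts and using that tracelessness forces the two partial sums to coincide. One small remark: your upper-bound saturating example $\mathrm{diag}(1,\dots,1,-1,\dots,-1)$ implicitly assumes $d$ is even (indeed, equality in Cauchy--Schwarz for a traceless Hermitian matrix forces a spectrum of the form $\pm c$ with equally many of each sign), a point the paper leaves implicit as well.
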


\begin{proof}
For any Hermitian matrix $M\in {\rm GL}(d)$, writing $M = U\bs{\eta}U\ct$ 
where $\bs{\eta}$ is a diagonal matrix whose entries are the eigenvalues 
$\{\eta_j\}$ of $M$ and using the unitary invariance of the trace and Frobenius 
norms gives the standard bounds
\begin{align}\label{eq:standardCS}
\|M\|_2= \sqrt{\sum_{j=1}^d \eta_j^2}
\leq \|M\|_1 
= \sum_{j=1}^d |\eta_j|
\leq \sqrt{d\sum_{j=1}^d \eta_j^2} 
= \sqrt{d}\|M\|_2
\end{align}
by the Cauchy-Schwarz inequality. 

To obtain a sharper lower bound for traceless Hermitian matrices, let 
\begin{align}
\bs{\eta} = \bs{\eta}^+ \oplus -\bs{\eta}^-
= \left(\begin{array}{cc} \bs{\eta}^+ & 0 \\ 0 & -\bs{\eta}^- 
\end{array}\right)
\end{align}
where $\oplus$ denotes the matrix direct sum and $\eta^{\pm}$ are both 
positive 
semidefinite. Then
\begin{align}
\|M\|_2
= \sqrt{\sum_{j=1}^d \eta_j^2}
= \sqrt{\|\bs{\eta}^+\|_2^2 + \|\bs{\eta}^-\|_2^2}
\leq \sqrt{\|\bs{\eta}^+\|_1^2 + \|\bs{\eta}^-\|_1^2}
= \sqrt{\frac{\|\bs{\eta}\|_1^2}{2}}
= \frac{1}{\sqrt{2}}\|M\|_1,
\end{align}
where we have used $\|\bs{\eta}^{\pm}\|_1 = \tfrac{1}{2}\|\bs{\eta}\|_1$ 
which holds for traceless matrices. The above lower bound is saturated when 
$\eta_{11}=1$, $\eta_{22}=-1$ and all other eigenvalues are zero.
\end{proof}

\section{The error per gate cycle in a quantum 
circuit}\label{sec:computationallyrelevant}

\begin{figure}
\begin{tabular}{c c c c c c c}
\Qcircuit @C=1em @R=.7em {
a)\hspace{10mm} & \lmeter & \gate{G_K} & \qw & \qw & \ldots & & 
\gate{G_1} & \qw & \rstick{\psi}\qw \\
b)\hspace{10mm} & \lmeter & \gate{G_K} & \gate{\mathcal{E}_K} & \qw & 
\ldots & & \gate{G_1} & \gate{\mathcal{E}_1} & \rstick{\psi}\qw
}
\end{tabular}
\caption{a) A quantum circuit $\mathcal{C}$ with $K$ rounds of gates, where 
time goes from right to left. b) An implementation $\tilde{\mathcal{C}}$ of 
$\mathcal{C}$ with Markovian noise.}\label{fig:qcircuit}
\end{figure}
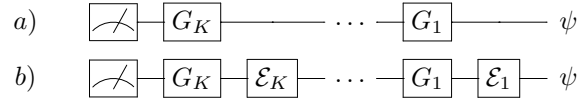

We now identify the ``computationally relevant'' error per gate cycle under 
Markovian noise and prove that it is closer to the diamond distance than the 
infidelity (\cref{thm:average_diamond}) and illustrate the different scalings 
with respect to the infidelity, the circuit length and the diamond distance.

A quantum circuit $\mathcal{C}$ consists of some $d$-dimensional input state 
$\psi$ (e.g., an $n$-qubit computational basis state), $K$ rounds of gates 
$G_1$, $\ldots$, $G_K$, and a final measurement $\mbf{M}$ (e.g., of a single 
qubit in the computational basis) as in \cref{fig:qcircuit}~(a). Under the 
assumption of Markovian noise, an experimental implementation 
$\tilde{\mathcal{C}}$ of $\mathcal{C}$ consists of a preparation of 
$\psi$~\footnote{The preparation of a mixed state corresponds to 
preparing a pure state and then applying some CPTP map, which can then be 
incorporated into the noise in the first gate.}, $K$ rounds of noisy gates 
$\tilde{\mc{G}}_1$, $\ldots$, $\tilde{\mc{G}}_K$ such that $\tilde{\mc{G}}_k = 
\mc{G}_k\mc{E}_k$ for some CPTP maps $\{\mc{E}_k\}$, and a final measurement 
$\tilde{\mbf{M}}\approx\mbf{M}$ as in \cref{fig:qcircuit}~(b), where we use 
calligraphic font to denote CPTP maps with $\mc{U}(A) = UAU\ct$ for any unitary 
operator $U$ and we denote the composition of channels by a product. For 
clarity of notation, we define 
\begin{align}
A_{b:a} = \begin{cases}
A_b\ldots A_a & \mbox{if } b\geq a \\
\md{I} & \mbox{otherwise.}
\end{cases}
\end{align} 

For decision or function problems (including, e.g., Shor's 
algorithm~\cite{Shor1999}), we can coarse-grain $\mbf{M}$ into two elements, 
$M_0$ and $\md{I}-M_0$, such that outcomes in $M_0$ ($\md{I} -M_0$) give a 
correct (incorrect) answer to the computation. The error rate of the 
experimental implementation $\hat{\mathcal{C}}$ (which does not include the 
error rate of the ideal circuit $\mc{C}$) is then
\begin{align}\label{eq:circuit_error}
\tau(\mathcal{C},\hat{\mathcal{C}}) &= \frac{1}{2}\sum_j \lvert 
\langle \tilde{M}_j, \tilde{\mc{G}}_{K:1}(\psi)\rangle
- \langle M_j,\mc{G}_{K:1}(\psi)\rangle  \rvert \notag\\
&=\lvert \langle \tilde{M}_0, \tilde{\mc{G}}_{K:1}(\psi)\rangle
- \langle M_0,\mc{G}_{K:1}(\psi)\rangle \rvert \notag\\
&=\lvert \langle \tilde{M}_0 - M_0, \tilde{\mc{G}}_{K:1}(\psi)\rangle 
+ \sum_{k=1}^K \langle M_0, 
\mc{G}_{K:k}(\mc{I}-\mc{E}_k)\tilde{\mc{G}}_{k-1:1}(\psi) 
\rangle
\rvert
\end{align}
for CPTP maps, where we have used
\begin{align}
	A_{b:a}-B_{b:a} = \sum_{k=a}^b A_{b:k+1}(A_k-B_k)B_{k-1:1}.
\end{align}
As we are primarily interested in the error due to gates, we set $\tilde{M}_0 = 
M_0$ (although we can use the triangle inequality to separate this out as 
necessary).

For a non-trivial circuit, that is, $G_{k:1}\neq \md{I}_d$, $M_k$ ($\approx 
\tilde{M}_k$) and $\psi_k$ will be in different bases that are fixed relative 
to each other and so the terms in the summation will typically be close to
\begin{align}\label{eq:mean_error}
	t_{\mathrm{avg},m}(\mathcal{E}) = 
	\md{E}_{\psi,\mbf{M}|\Tr M=m,M^2=M} \lvert\langle 
	M,\mathcal{E}(\psi)-\psi\rangle\rvert,
\end{align}
where $m = \Tr M_0$. We now prove that this quantity is proportional to the 
diamond distance, up to a dimensional factor and a factor that depends on the 
rank of $M_0$. For many problems, such as Shor's factoring algorithm, and for 
the measurements between rounds of error correction, $m\in O(d^\alpha)$ and so 
the dimensional scaling between the following upper and lower bounds is 
primarily a consequence of the corresponding dimensional factors in 
\cref{cor:diamonddistance}.

\begin{thm}\label{thm:average_diamond}
	For any quantum channel $\mc{E}$ and $m\in\mbb{N}$,
\begin{align}
\frac{2m(m+1)\epsilon_\diamond(\mc{E})}{d^3(d+1)^2} \leq 
t_{\mathrm{avg},m}(\mc{E}) 
\leq 2\epsilon_\diamond(\mc{E})
\end{align}
where $\Delta = \mc{E} - \mc{I}_d$.
\end{thm}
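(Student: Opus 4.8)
The plan is to establish the two inequalities by separate routes: the upper bound from a pointwise norm estimate, and the lower bound from an anti-concentration argument that converts a second moment into $\epsilon_\diamond(\mc{E})$ through \cref{cor:diamonddistance}. Throughout write $X=\langle M,\Delta(\psi)\rangle=\Tr M\Delta(\psi)$, which is real because $M$ and $\Delta(\psi)$ are Hermitian, so that $t_{\mathrm{avg},m}(\mc{E})=\md{E}\lvert X\rvert$ with the average taken over Haar-random pure $\psi$ and a uniformly (unitarily-invariantly) chosen rank-$m$ projector $M$.

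For the upper bound I would argue pointwise. For any pure $\psi$ and any rank-$m$ projector $M$, H\"older's inequality gives $\lvert X\rvert=\lvert\Tr M\Delta(\psi)\rvert\leq\lVert M\rVert_\infty\lVert\Delta(\psi)\rVert_1=\lVert\Delta(\psi)\rVert_1$ since $\lVert M\rVert_\infty=1$ for a projector. As $\psi$ is an admissible (unentangled) input, $\lVert\Delta(\psi)\rVert_1\leq\vertiii{\Delta}_1=2\epsilon_\diamond(\mc{E})$, and averaging over $\psi$ and $M$ preserves the bound, giving $t_{\mathrm{avg},m}(\mc{E})\leq 2\epsilon_\diamond(\mc{E})$.

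For the lower bound the difficulty is that $t_{\mathrm{avg},m}$ averages an absolute value, which can be small even when $\Delta$ has large norm, so I would control it by an anti-concentration inequality. The simplest version is $\md{E}X^2=\md{E}\lvert X\rvert\,\lvert X\rvert\leq(\max_{\psi,M}\lvert X\rvert)\,\md{E}\lvert X\rvert\leq 2\epsilon_\diamond(\mc{E})\,\md{E}\lvert X\rvert$, using the pointwise bound from the upper-bound step, which rearranges to $t_{\mathrm{avg},m}(\mc{E})\geq\md{E}X^2/[2\epsilon_\diamond(\mc{E})]$. The crux is then the second moment. I would evaluate it by averaging over $\psi$ with $\md{E}_\psi\,\psi\otimes\psi=(\md{I}+\mbb{S})/[d(d+1)]$, where $\mbb{S}$ is the swap, and over $M$ with $\md{E}_M\,M\otimes M=c_1\md{I}_{d^2}+c_2\mbb{S}$, where matching the traces against $\md{I}$ and $\mbb{S}$ (using $\Tr M=\Tr M^2=m$) fixes $c_2=m(d-m)/[d(d^2-1)]$. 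Because $\Delta$ is trace-annihilating, $\Tr\Delta(A)=0$ for all $A$, the $c_1\md{I}$ contribution vanishes identically; applying $\Tr[(A\otimes B)\mbb{S}]=\Tr AB$ and $\Delta(|k\ra\!\la j|)=\Delta(|j\ra\!\la k|)\ct$ to the remaining terms leaves $\md{E}X^2=\tfrac{c_2 d}{d+1}\bigl(\alpha(\mc{E})^2+\lVert J(\Delta)\rVert_2^2\bigr)$, recognising $\lVert\Delta(\md{I})\rVert_2^2=d^2\alpha(\mc{E})^2$ and $\sum_{jk}\lVert\Delta(|j\ra\!\la k|)\rVert_2^2=d^2\lVert J(\Delta)\rVert_2^2$. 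Discarding the non-negative $\alpha$ term and inserting $\epsilon_\diamond(\mc{E})\leq\tfrac{d^{3/2}}{2}\lVert J(\Delta)\rVert_2$ from \cref{cor:diamonddistance} yields $\md{E}X^2\geq\tfrac{4c_2}{d^2(d+1)}\epsilon_\diamond(\mc{E})^2$, and substituting into the anti-concentration bound produces a lower bound proportional to $c_2\,\epsilon_\diamond(\mc{E})/[d^2(d+1)]$.

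I expect the main obstacle to be pinning down the sharp combinatorial prefactor claimed in the statement. The crude step $\md{E}\lvert X\rvert\geq\md{E}X^2/\max\lvert X\rvert$ is lossy, and this route only delivers a prefactor of order $m(d-m)/(d-1)$, which is weaker than the stated $m(m+1)$; recovering the latter appears to require the sharper moment inequality $\md{E}\lvert X\rvert\geq(\md{E}X^2)^{3/2}/(\md{E}X^4)^{1/2}$ together with an explicit fourth-moment evaluation, that is, a four-copy Schur--Weyl/Weingarten average over both $\psi$ and $M$, whose combinatorics would have to be shown to reconcile with the connection to $\epsilon_\diamond$. Controlling that fourth moment is where the real work lies, and any complete argument must also respect the degeneracy as $M\to\md{I}_d$ (i.e.\ $m\to d$), where $X\equiv0$ and the lower bound must vanish, so the interesting regime is $m<d$.
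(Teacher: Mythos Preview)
Your route is exactly the paper's: the upper bound via $\lvert\Tr M\Delta(\psi)\rvert\leq\lVert\Delta(\psi)\rVert_1\leq 2\epsilon_\diamond(\mc{E})$, and the lower bound via the anti-concentration step $\md{E}\lvert X\rvert\geq\md{E}X^2/\max\lvert X\rvert$ together with a two-copy Haar evaluation of $\md{E}X^2$ and \cref{cor:diamonddistance}. No fourth moment is used in the paper.

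The discrepancy you flag is real, but it is the paper's computation that is off, not yours. The paper discards the antisymmetric contribution by asserting $\pi_a\Delta\tn{2}(\pi_s)=0$. That is false: since $S(\Delta\otimes\Delta)=(\Delta\otimes\Delta)S$ as superoperators, $\Delta\tn{2}(\pi_s)$ merely \emph{commutes} with $S$ (it is block-diagonal in the symmetric/antisymmetric decomposition), it is not annihilated by $\pi_a$. A direct check for $d=2$ with $\mc{E}(\rho)=X\rho X$ gives $\Delta\tn{2}(S)=2Z\otimes Z+2Y\otimes Y$ and hence $\pi_a\Delta\tn{2}(\pi_s)\pi_a=-2\pi_a\neq 0$. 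If one keeps both $\pi_s$ and $\pi_a$ pieces, the combination $a\Tr[\pi_s\Delta\tn{2}(P)]+b\Tr[\pi_a\Delta\tn{2}(P)]$ collapses (using $\Tr\Delta\tn{2}(P)=0$) to $\tfrac{a-b}{2}\Tr[S\Delta\tn{2}(P)]=c_2\Tr[S\Delta\tn{2}(P)]$, which is precisely your expression with $c_2=m(d-m)/[d(d^2-1)]$. So your second moment
\[
\md{E}X^2=\frac{m(d-m)}{(d-1)(d+1)^2}\bigl(\alpha(\mc{E})^2+\lVert J(\Delta)\rVert_2^2\bigr)
\]
is the correct one, and after dropping $\alpha^2\geq 0$ and using \cref{cor:diamonddistance} the lower bound reads
\[
t_{\mathrm{avg},m}(\mc{E})\ \geq\ \frac{2m(d-m)}{d^3(d-1)(d+1)^2}\,\epsilon_\diamond(\mc{E}),
\]
not the $m(m+1)$ form in the statement. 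Your sanity check at $m=d$ confirms this: there $M=\md{I}_d$, $X\equiv 0$, and any valid lower bound must vanish, which $m(d-m)$ does and $m(m+1)$ does not. So do not chase a fourth-moment argument to ``recover'' $m(m+1)$; your second-moment computation already matches the paper's method and corrects its constant.
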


\begin{proof}
The upper bound is trivial from the definition of $\epsilon_\diamond(\mc{E})$ 
and \cref{cor:diamonddistance} (although note the factor of 2). For the lower 
bound, let 
\begin{align}\label{eq:comp_error}
t(M,\mathcal{E},\psi) = \lvert\langle M,\Delta(\psi)\rangle\rvert,
\end{align}
so that $t_{\mathrm{avg},m}(\mc{E}) = \md{E}_{\psi,M|\Tr M=m} 
t(M,\mc{E},\psi)$. The absolute value in \cref{eq:comp_error} makes evaluating 
the mean difficult. To circumvent this, we use the identity $a^2 = a\otimes a$ 
for $a\in\mbb{R}^d$ and the distributivity of the tensor product to obtain
\begin{align}\label{eq:variance}
\md{V}&= \md{E}_{\psi,M|\Tr M=m,M^2=M} [t(M,\mc{E},\psi)^2] \notag\\
&= \int_{\mbb{S}_d}\mathrm{d}\psi 
\int_{\mathrm{U}(d)}\mathrm{d}U\, 
\Tr\left[\mc{U}(M^{(m)})\tn{2} \Delta\tn{2}
(\psi\tn{2})\right] .
\end{align}
To evaluate the integrals, let $S = \sum_{i,j} |ij\ra\!\la ji|$ be the 
two-qudit swap gate and let $\pi_{a/s} = (\md{I}_{d^2} \pm S_d)/2$ be the 
projectors onto the symmetric and antisymmetric subspaces of $\mbb{C}^{d^2}$ 
respectively. For any Hermitian matrix $M$,
\begin{align}
\int \mathrm{d}U \mc{U}\tn{2}(M) = \frac{\Tr \pi_s M}{\Tr \pi_s}\pi_s + 
\frac{\Tr \pi_a M}{\Tr \pi_a}\pi_a
\end{align}
by Schur-Weyl duality and so, with $\Tr S(A\otimes B) = \Tr AB$,
\begin{align}
	\int \mathrm{d}\psi\,\psi\tn{2} &= \int \mathrm{d}U\,\mc{U}(\phi\tn{2}) 
	\notag\\
	&= 	\frac{\md{I}_{d^2} + S}{d^2+d} \notag\\
	\int_{\mathrm{U}(d)}\mathrm{d}U\,\mc{U}(M^{(m)})\tn{2} &= 
	\frac{2m^2+2m}{d^2+d}\pi_s + \frac{2m^2 - 2m}{d^2 - d}\pi_a
\end{align}
where $\phi$ is an arbitrary pure state. Substituting this into 
\cref{eq:variance} gives
\begin{align}
\md{V} &= \frac{m^2 + m}{(d^2+d)^2}
\Tr\left[(\md{I}_{d^2}+S) \Delta\tn{2}(\md{I}_{d^2} + S)\right] \notag\\
&= \frac{m^2 + m}{(d^2+d)^2}
\Tr\left[S \Delta\tn{2}(\md{I}_{d^2} + S)\right] \notag\\
&=\frac{m^2 + m}{(d+1)^2}(\|J(\Delta)\|_2^2 + \|\Delta(\md{I}_d)\|_2^2) \notag\\
&\geq \frac{4m(m+1)\epsilon_\diamond(\mc{E})^2}{d^3(d+1)^2}
\end{align}
where we have used $\pi_a \Delta\tn{2}(\pi_s)=0$, \cite[Prop. 
3]{Wallman2015}, the non-negativity of norms, and \cref{cor:diamonddistance}. 
Now note that
\begin{align}
\max t(M,\mathcal{E},\psi) \leq 2\epsilon_\diamond(\mc{E}),
\end{align}
and so, with \cref{cor:diamonddistance} and the non-negativity of norms,
\begin{align}
t_{\rm avg}(\mc{E}) 
&\geq \frac{\md{E}_{\psi,\mbf{M}}[t(M,\mc{E},\psi)^2]}{\max 
t(M,\mc{E},\psi)}\notag\\
&\geq \frac{2m(m+1)\epsilon_\diamond(\mc{E})}{d^3(d+1)^2}.
\end{align}
\end{proof}

\Cref{thm:average_diamond} gives a lower bound on the typical size of the 
individual terms in \cref{eq:circuit_error}, which scales as $\sqrt{r}$ for 
general noise processes. However, these individual terms can have arbitrary 
signs and so can combine in several ways. In particular, there will be two 
characteristic scalings with respect to $K$, namely, 
$\tau(\mc{C},\hat{\mc{C}})\in O(K)$ when all the signs are the same and 
$\tau(\mc{C},\hat{\mc{C}})\in O(\sqrt{K})$ when the signs are essentially 
uniformly random.

Almost all the signs will be the same in at least two regimes as illustrated in 
\cref{fig:circuit_errors}~(a), namely, for systematic stochastic and unitary 
noise. However, for general stochastic and unitary noise, the signs will be 
uniformly random and so the summation in \cref{eq:circuit_error} will look more 
like a one-dimensional random walk with typical step size proportional to 
$t_{\rm avg,m}(\mc{E})$ and so the total error will scale as $\sqrt{K}t_{\rm 
avg,m}(\mc{E})$ as illustrated in \cref{fig:circuit_errors}~(b).

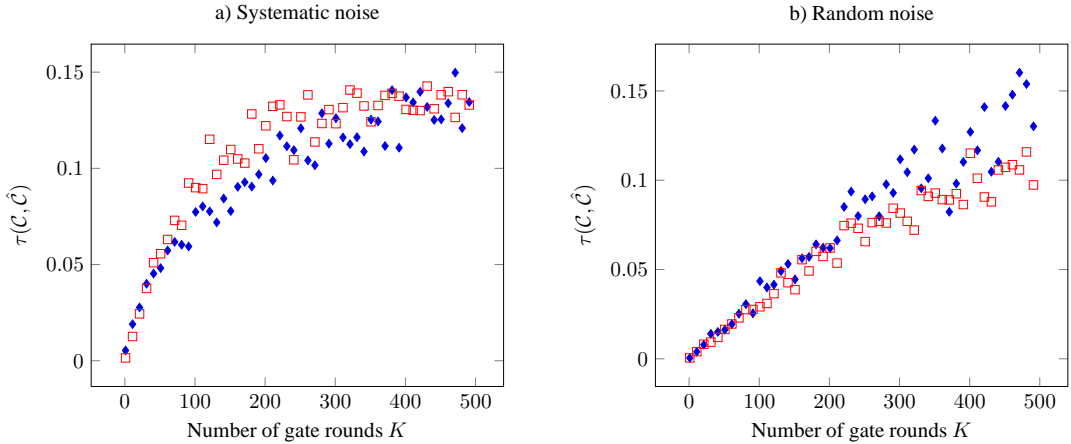
\begin{figure}
\begin{tikzpicture}[scale=.8]
\begin{axis}[
xlabel={Number of gate rounds $K$}, 
ylabel={$\tau(\mc{C},\hat{\mc{C}})$},
yticklabel style={/pgf/number format/fixed},
title = {a) Systematic noise},
]
\addplot+[mark = diamond*, only marks, ]
table[x=numgates, y=mean, col sep=comma,] 
{sumunitarynoise.csv};
\addplot+[mark = square, only marks, ]
table[x=numgates, y=mean, col sep=comma,] 
{sumstochasticnoise.csv};
\end{axis}
\end{tikzpicture}
\qquad
\begin{tikzpicture}[scale=.8]
\begin{axis}[
xlabel={Number of gate rounds $K$}, 
ylabel={$\tau(\mc{C},\hat{\mc{C}})$},
yticklabel style={/pgf/number format/fixed},
title = {b) Random noise},
]
\addplot+[mark = diamond*, only marks, ]
table[x=numgates, y=mean, col sep=comma,] 
{worstsumunitarynoise.csv};
\addplot+[mark = square, only marks, ]
table[x=numgates, y=mean, col sep=comma,] 
{worstsumstochasticnoise.csv};
\end{axis}
\end{tikzpicture}
\caption{(Color online) Average error $\tau(\mc{C},\hat{\mc{C}})$ from 
\cref{eq:circuit_error} averaged over 100 random three-qubit circuits of each 
length $K$ under unitary (blue diamonds) and stochastic (red squares) noise. 
The ideal circuits $\mc{C}$ consist of preparations in the state 
$|\psi\ra\tn{3}$, an $X$-basis measurement on the final qubit and $K$ gates 
drawn uniformly from the set of all combinations of: a) 
controlled-phase, and $R_{\pi/8} = |0\ra\!\la 0|+\exp(\pi i/4)|1\ra\!\la 1|$ 
gates; and b) controlled-phase, Hadamard and $R_{\pi/8}$ gates acting 
disjointly on three qubits. The systematic unitary noise is $U = 
\cos(\theta)\md{I}_8 + i\sin(\theta)ZIZ$ with $\theta = 0.001$ and the 
systematic stochastic noise is $\mc{E}(\rho) = 0.999\rho + 0.001 ZIZ\rho ZIZ$. 
The random noise is fixed and gate-independent for each circuit. The random 
stochastic noise is generated by randomly generating a Pauli channel with 
infidelity 2/300 and the random unitary noise is generated by conjugating $U = 
\cos(\theta)\md{I}_8 + i\sin(\theta)ZIZ$ by a Haar-random unitary with $\theta 
= 0.03$.
\label{fig:circuit_errors}}
\end{figure}

\section{Conclusion}

We have obtained improved bounds on the diamond distance in terms of 
the infidelity and the unitarity~\cite{Wallman2015}, which can both be 
efficiently estimated. When noise is approximately stochastic, the improved 
bound scales as $O(r)$. If the unitarity indicates that the noise contains 
coherent errors, then the improved upper and lower bounds both scale as 
$O(\sqrt{r})$. 

We have also shown that the diamond distance and infidelity are not 
``worst-case'' and ``average'' error rates in the same sense by constructing 
worst-case and average versions of the infidelity and diamond distance 
respectively and showing that they are proportional to the standard versions up 
to dimensional factors. An important semantic consequence is that referring to 
the diamond distance as the worst-case error rate and the infidelity as the 
average error rate is misleading because they quantify error rates in 
fundamentally different ways.

We then provided analytic and numerical arguments demonstrating that the total 
error rate of a circuit (or of a circuit fragment consisting of the gates 
between two rounds of error correction) behaves qualitatively more like the 
diamond distance instead of the infidelity. The primary contribution of 
coherent errors to the average error is not a possible worst-case alignment of 
rotations giving an infidelity that scales quadratically with the number of 
gates. Rather, the primary contribution is due to the lowest order terms 
behaving like the diamond distance as the effective preparations and 
measurements (when propagated through the circuit to immediately precede and 
succeed a lowest-order error) are not in the same basis unless the circuit is 
trivial. As noted in Ref.~\cite{Wallman2015a}, the $\sqrt{r}$ scaling can be 
dominant even when any coherent noise processes make a negligible contribution 
(e.g., 1\%) to the infidelity.

We also demonstrated numerically that there are two characteristic
scalings with the circuit length $K$, namely, $K$ and $\sqrt{K}$ for systematic 
and random noise respectively. In particular, the average error does not 
demonstrate $K^2$ scaling even for systematic coherent errors as would be 
expected if the average error was directly proportional to the infidelity.

\acknowledgments 

The author acknowledges helpful discussions with Arnaud Carignan-Dugas, Hillary 
Dawkins, Joseph Emerson, Steve Flammia and Yuval Sanders. This research was 
supported by the U.S. Army Research Office through grant W911NF-14-1-0103.

\bibliography{library}

\end{document}